\let\doendproof\endproof
\renewcommand\endproof{~\hfill\qed\doendproof}
\tikzstyle{defnode}=[circle,fill=black,draw=black,inner sep=0pt,minimum size=4pt]
\def\OPT{\texttt{OPT}\xspace}
\def\ALG{\texttt{ALG}\xspace}
\def\o{\texttt{O}\xspace}
\def\A{\texttt{A}\xspace}
\def\ec{{\sc edge-coloring}\xspace}
\newcommand{\ab}[1] {\left\vert #1\right\vert}
\DeclarePairedDelimiter{\ceil}{\lceil}{\rceil}
\DeclareMathOperator{\dgn}{dgn}
\title{Optimal Online Edge Coloring of Planar Graphs with Advice}
\author{Jesper W. Mikkelsen\thanks{Supported in part by the Villum Foundation and the Danish Council for Independent Research, Natural Sciences.}
\email{\\ jesperwm@imada.sdu.dk}
\institute{Department of Mathematics and Computer Science\\
University of Southern Denmark\\}}
\begin{document}
\maketitle

\begin{abstract} 
Using the framework of advice complexity, we study the amount of knowledge about the future that an online algorithm needs to color the edges of a graph optimally, i.e., using as few colors as possible. For graphs of maximum degree $\Delta$, it follows from Vizing's Theorem that $O(m\log \Delta)$ bits of advice suffice to achieve optimality, where $m$ is the number of edges. We show that for graphs of bounded degeneracy (a class of graphs including e.g. trees and planar graphs), only $O(m)$ bits of advice are needed to compute an optimal solution online, independently of how large $\Delta$ is. On the other hand, we show that $\Omega (m)$ bits of advice are necessary just to achieve a competitive ratio better than that of the best deterministic online algorithm without advice. Furthermore, we consider algorithms which use a fixed number of advice bits per edge (our algorithm for graphs of bounded degeneracy belongs to this class of algorithms). We show that for bipartite graphs, any such algorithm must use at least $\Omega(m\log \Delta)$ bits of advice to achieve optimality.
\end{abstract}

\section{Introduction}
An \emph{edge coloring} of a graph is an assignment of colors to the edges of the graph such that no two adjacent edges share the same color. Many scheduling and assignment problems can be modeled as edge coloring problems. The \emph{online edge coloring problem}, which we refer to simply as \ec, was introduced by Bar-Noy et~al.~\cite{edgecoloring}. In this problem, the edges of a graph arrive one by one. The edges are specified by their endpoints, but the vertices of the graph are not known in advance. Each edge must be assigned a color before the next edge arrives, under the constraint that no two adjacent edges are assigned the same color. The color assigned to an edge cannot be changed later on. The goal is to use as few colors as possible. 

Traditionally, worst-case competitive analysis \cite{CompRatio1, CompRatio2} is used to measure the performance of an online algorithm. The solution produced by the online algorithm, $\ALG$, is compared to that of an optimal offline algorithm, $\OPT$, which knows the entire input in advance. 
More precisely, let $\ALG(\sigma)$ ($\OPT(\sigma)$) denote the number of colors used by $\ALG$ ($\OPT$) when coloring a sequence, $\sigma$, of edges. We say that $\ALG$ is \emph{$c$-competitive} if there exists a constant $c_0$ such that $\ALG(\sigma)\leq c\cdot\OPT(\sigma)+c_0$ for any input sequence $\sigma$. If the inequality holds with $c_0=0$, then $\ALG$ is said to be \emph{strictly $c$-competitive}.

In \cite{edgecoloring} it is shown that any \ec algorithm, which never introduces a new color unless forced to do so, is strictly $2$-competitive and that no online algorithm, even if we allow randomization and restrict the input graph to being a forest, can achieve a better competitive ratio.

The underlying assumption of competitive analysis, that nothing is known about future parts of the input, is sometimes unrealistic. Therefore, for many online problems, various relaxations of this assumption have been suggested, including look-ahead~\cite{lookahead}, locality of reference~\cite{lor} and several models where the input is generated from some known probability distribution~\cite{beyondcomp, randor, stat}. 
In this paper, we consider the recent idea of \emph{advice complexity} introduced in \cite{A4} and further developed in \cite{A1, A2, A3}. Advice complexity provides a quantitative and problem-independent approach for relaxing the online constraint by providing the algorithm partial knowledge of the future. Our main goal in applying the framework of advice complexity to \ec is to better understand the online hardness of the problem. How much (and which kind of) information about the future are we lacking in order to produce an optimal edge coloring in the online setting?

\subsubsection{Advice complexity models.}  
In this paper, we consider the two most widely used models of advice complexity. In both models, an oracle, which has unlimited computational power and knows the entire input, provides the online algorithm $\ALG$ with some advice bits. For \ec, the input is a sequence of $m$ edges $\langle e_1,\ldots , e_m\rangle$. The two models are defined as follows:

\emph{Advice-with-request} \cite{A2}. In this model, \ALG receives some fixed number, $b$, of advice bits along with each request. 
 That is, when the edge $e_i$ arrives, the algorithm receives some advice $b_i\in\{0,1\}^b$ from the oracle. The algorithm then decides which color to assign to $e_i$ based on the edges $e_1,\ldots , e_i$ that have been revealed up until now and the advice $b_1,\ldots, b_i$ received thus far.  

\emph{Advice-on-tape} \cite{A1, A3}. In this model, the online algorithm $\ALG$ is provided access to an infinite advice tape prepared by the oracle. The algorithm may, at any point in time, read some number of advice bits from the tape. When the edge $e_i$ arrives, the algorithm must decide which color to assign to $e_i$ based on the edges $e_1,\ldots , e_i$ that have been revealed up until now and the advice read so far from the tape. The \emph{advice complexity}, $b(m)$, of \ALG is the largest number of advice bits read by \ALG, over all possible input graphs with at most $m$ edges.  

Note that an algorithm in the advice-with-request model receives exactly $bm$ bits of advice in total. Thus, it can be converted into an algorithm with advice complexity $bm+O(\log b)$ in the advice-on-tape model (the $O(\log b)$ bits of advice are used to encode $b$). Converting in the opposite direction is not always possible in a meaningful way. In particular, an algorithm in the advice-on-tape model is allowed to read only a sublinear number of advice bits. This is not possible in the advice-with-request model.

\subsubsection{Preliminaries.}
All graphs considered are simple. We denote the number of edges in a graph by $m$, the number of vertices by $n$ and the maximum degree by $\Delta$. 
A graph $G$ is \emph{$k$-edge-colorable} if there exists an edge coloring of $G$ with at most $k$ different colors. The \emph{chromatic index} $\chi ' (G)$ of $G$ is the smallest integer $k$ such that $G$ is $k$-edge-colorable.  We assume that colors are represented by consecutive positive integers. 
For a bipartite graph $G$, we write $G=(L,R)$ if $L$ and $R$ form a bipartition of the vertices of $G$. We let $K_{a,b}$ denote the complete bipartite graph $G=(L, R)$ where $\ab{L}=a$ and $\ab{R}=b$.

In addition to bipartite graphs, we consider trees, planar graphs and, more generally, $d$-degenerate graphs. A graph is \emph{$d$-degenerate} if there is an ordering $v_1, v_2,\ldots , v_n$ of its vertices such that, for $1\leq i\leq n$, the vertex $v_i$ is adjacent to at most $d$ vertices in $\{v_1,\ldots , v_{i-1}\}$. The \emph{degeneracy} of a graph $G$ is the least integer $d$ such that $G$ is $d$-degenerate. An edge $e=(v_i, v_k)$ where $i<k$ is said to be a \emph{front-edge} at $v_i$ and a \emph{back-edge} at $v_k$. Furthermore, $d_f(v_i)$ is the number of front-edges at $v_i$.

The notion of degeneracy has appeared under other names and many equivalent definitions exist (see e.g.~\cite{ToftJensen}). Note that the degeneracy of a graph is at most $\Delta$. A graph is $1$-degenerate if and only if it is a forest. Planar graphs are $5$-degenerate. Other graph classes of bounded degeneracy include graphs of bounded genus, bounded tree-width, and graphs excluding a fixed minor. 

It is clear that $\Delta\leq \chi' (G)$ for any graph $G$. The celebrated Vizing's Theorem \cite{vizing} states that $\chi'(G)\in\{\Delta, \Delta +1\}$. The following relationship between edge coloring and degeneracy, which is also due to Vizing, will be used extensively in the design of our algorithm. 
\begin{theorem}[Vizing \cite{vizingd, bog}]
\label{vizing}
Let $G$ be a $d$-degenerate graph of maximum degree $\Delta$. If $\Delta\geq 2d$, then $\Delta$ colors suffice for edge coloring $G$.
\end{theorem}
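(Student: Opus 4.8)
The plan is to argue by contradiction, using the theory of edge-critical graphs and in particular Vizing's Adjacency Lemma. So suppose $G$ is $d$-degenerate with $\Delta(G)=\Delta\geq 2d$ but $\chi'(G)\neq\Delta$; by Vizing's Theorem we then have $\chi'(G)=\Delta+1$.

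The first step is to pass to a well-behaved subgraph. Among all subgraphs $H\subseteq G$ with $\chi'(H)=\Delta+1$, pick one with the fewest edges. Since deleting an edge lowers the chromatic index by at most one, minimality forces $\chi'(H-e)=\Delta$ for every edge $e$ of $H$; that is, $H$ is $(\Delta+1)$-edge-critical (and hence connected). Moreover $\Delta(H)=\Delta$: on one hand $\Delta(H)\leq\Delta(G)=\Delta$ since $H$ is a subgraph, and on the other hand $\Delta+1=\chi'(H)\leq\Delta(H)+1$ by Vizing's Theorem. Finally, $H$ is $d$-degenerate, being a subgraph of the $d$-degenerate graph $G$.

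The second step is to play degeneracy against criticality. Because $H$ is $d$-degenerate, it has a vertex $v$ with $\deg_H(v)\leq d$. On the other hand, Vizing's Adjacency Lemma applied to the $(\Delta+1)$-edge-critical graph $H$ guarantees that $v$ is adjacent in $H$ to at least $\Delta(H)-\deg_H(v)+1=\Delta-\deg_H(v)+1$ vertices of degree $\Delta$. Since $v$ has exactly $\deg_H(v)$ neighbors, this gives $\deg_H(v)\geq\Delta-\deg_H(v)+1$, i.e.\ $2\deg_H(v)\geq\Delta+1\geq 2d+1$, whence $\deg_H(v)\geq d+1$ --- contradicting $\deg_H(v)\leq d$. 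Therefore $\chi'(G)=\Delta$.

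I expect the only real content to be Vizing's Adjacency Lemma; the rest is bookkeeping, and the hypothesis $\Delta\geq 2d$ enters precisely to close the gap $2\deg_H(v)\geq\Delta+1$ against $\deg_H(v)\leq d$. If one wanted a self-contained argument, the main obstacle would be re-deriving the needed case of that lemma, i.e.\ the standard Vizing-fan recoloring showing that a low-degree vertex in an edge-critical Class 2 graph must have many neighbors of maximum degree.
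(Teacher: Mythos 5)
The paper does not prove this theorem; it quotes it as a known result of Vizing and cites the literature, so your argument has to stand on its own. Your overall plan --- pass to an edge-minimal (hence critical) subgraph $H$ with $\chi'(H)=\Delta+1$, observe that $H$ inherits $d$-degeneracy and has $\Delta(H)=\Delta$, and then play Vizing's Adjacency Lemma against a vertex of degree at most $d$ --- is the standard route to this theorem, and the bookkeeping in your first step is fine. The gap is that you misquote the Adjacency Lemma. Its correct form is: for an edge $xy$ of a critical graph, the vertex $x$ has at least $\Delta-\deg_H(y)+1$ neighbours of degree $\Delta$ other than $y$; the degree appearing in the bound is that of the \emph{other} endpoint of the edge. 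Your version, that $v$ itself has at least $\Delta-\deg_H(v)+1$ neighbours of degree $\Delta$, would imply $\deg_H(v)\geq(\Delta+1)/2$ for every vertex of every critical graph, which is false: for example, subdividing one edge of $K_5$ yields a class~2 graph whose critical subgraph has $\Delta=4$ and a vertex of degree $2$. So the one-line contradiction $2\deg_H(v)\geq\Delta+1$ is not available.

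The argument can be repaired with one additional idea. Let $T$ be the set of vertices of $H$ of degree at least $d+1$; then $T\neq\emptyset$ because $\Delta(H)=\Delta\geq 2d\geq d+1$. Take any $y\in T$. If $y$ has a neighbour $x$ with $\deg_H(x)\leq d$, the correctly stated Adjacency Lemma gives $y$ at least $\Delta-d+1\geq d+1$ neighbours of degree $\Delta$, and these all lie in $T$; if not, then all $\deg_H(y)\geq d+1$ neighbours of $y$ already lie in $T$. Either way, every vertex of the induced subgraph $H[T]$ has degree at least $d+1$ within $H[T]$, contradicting the fact that every nonempty subgraph of the $d$-degenerate graph $G$ contains a vertex of degree at most $d$. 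With this replacement your proof goes through.
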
 
\subsubsection{Our contribution.}
By Vizing's theorem, there is a trivial upper bound on the advice complexity of \ec of $O(m\log \Delta)$ bits. We improve on this upper bound for $d$-degenerate graphs by showing that $O(m\log d)$ bits of advice suffice to achieve optimality. In particular, only $O(m)$ bits of advice are needed for graphs of bounded degeneracy. The algorithm that we present works in both the advice-on-tape and the advice-with-request model. On the hardness side, we show that $\Omega(m)$ bits of advice are required in order to achieve a competitive ratio better than $2$. This lower bound holds even for forests. Finally, we show that in the advice-with-request model, $\Omega(m\log\Delta)$ bits of advice are necessary to achieve optimality for bipartite graphs.

\subsubsection{Related work.} 
While \ec has not previously been studied in the framework of advice complexity, many other online problems have, see e.g.~\cite{A4, A2, A1, A3, Alist, Aschedule, Ak-server, Avtripartite, Avbipartite, Avpath, Abuffer2, Amakespan, Aexplore, Ais, Vknapsack, Asteiner, Aasg}. We remark that, contrary to \ec, for several of the problems studied in the literature, sublinear advice (in the number of requests) suffice to achieve a competitive ratio better than that of the best deterministic algorithm without advice. This is the case for online problems such as bin-packing~\cite{Abp}, list accessing~\cite{Alist}, knapsack~\cite{Vknapsack}, makespan minimization~\cite{Amakespan}, paging~\cite{A1} and ski-rental\cite{A4}. 

The computational complexity of (offline) edge coloring is well-studied. In general, deciding if $\chi'(G)=\Delta$ is NP-complete~\cite{NP}, but a $(\Delta+1)$-edge-coloring can always be found in polynomial time. For planar graphs, fast algorithms exist for most values of $\Delta$ (see e.g.~\cite{planar1, planar2}). Also, the edge coloring guaranteed to exist by Theorem~\ref{vizing} can be computed efficiently \cite{efficient}.

\section{An algorithm for $d$-degenerate graphs}
In this section, we present the algorithm for $d$-degenerate graphs in the advice-with-request model. As mentioned in the introduction, converting the algorithm to the advice-on-tape model is straightforward.
\begin{theorem}
\label{main}
Let $d\in\mathbb{N}$. For the class of $d$-degenerate graphs, there exists an \ec algorithm which always produces an optimal coloring and uses $$ 1+\ceil{\log (2d)}+\ceil{\log (d+1)}=O(\log d)$$
bits of advice per edge and, hence, $O(m\log d)$ bits of advice in total.
\end{theorem}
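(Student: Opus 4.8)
The plan is for the oracle, which sees the whole graph and the arrival order, to fix in advance a degeneracy ordering $v_1,\dots,v_n$ and one particular optimal edge coloring $\phi$, and then to design the advice so that the algorithm can recover $\phi(e)$ edge by edge. The role of Theorem~\ref{vizing} is to separate two regimes. If $\Delta<2d$, then by Vizing's Theorem $\chi'(G)\le\Delta+1\le 2d$, so an optimal coloring lives on a palette of size at most $2d$ and the oracle can essentially transmit $\phi(e)$ verbatim, which already fits in $\ceil{\log(2d)}$ bits. If $\Delta\ge 2d$, Theorem~\ref{vizing} gives $\chi'(G)=\Delta$, and now the palette $\{1,\dots,\Delta\}$ can be vastly larger than anything we can name with $O(\log d)$ bits per edge, so here the algorithm must reconstruct the bulk of each color from the structure it has already seen, with the advice resolving only a residual ambiguity of size $O(d^{2})$. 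I would present a single scheme covering both regimes, in which the ``reconstruction'' part becomes trivial in the first regime.

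Concretely, when $e=(v_i,v_k)$ arrives with $i<k$, one advice bit tells the algorithm which endpoint is earlier in the degeneracy order, so it learns that $e$ is a back-edge at $v_k$; recall that $v_k$ has at most $d$ back-edges in total, and the algorithm can recognize these among the edges at $v_k$ it has already seen (they are exactly those whose orientation bit marked $v_k$ as the later endpoint). The coloring $\phi$ will be chosen so that, at every vertex, the back-edges receive, in a fixed local order, colors picked greedily from the colors not yet blocked at that vertex and then pushed up by at most $2d$ so as to also dodge the at most $2d-1$ other back-edges sitting at the two endpoints of the edge. The algorithm maintains, for every vertex, the set of colors used so far by its incident edges; from the set at $v_k$ it computes a candidate base color for $e$, then spends $\ceil{\log(d+1)}$ bits of advice to learn the slot of $e$ among the back-edges at $v_k$ and $\ceil{\log(2d)}$ bits to learn the offset inside the size-$2d$ window. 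Summing $1+\ceil{\log(d+1)}+\ceil{\log(2d)}$ gives exactly the stated per-edge budget; and when $\Delta<2d$ the base is $0$ and the window $\{1,\dots,2d\}$ already contains the whole optimal palette, so the scheme degenerates to transmitting $\phi(e)$ directly.

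I expect the main difficulty to be two intertwined points. First, one has to actually exhibit the optimal coloring $\phi$ with this locally reconstructible structure: Theorem~\ref{vizing} only asserts that \emph{some} $\Delta$-edge-coloring exists, so the proof must refine (or re-prove) it into one where, at each vertex, the back-edge colors lie in a bounded window above the colors forced by that vertex's front-edges, while simultaneously avoiding the back-edges at both endpoints of every edge. Second, and this is the real obstacle, edges arrive adversarially, so at the moment $e=(v_i,v_k)$ arrives the algorithm may have seen almost none of $v_k$'s front-edges -- whose colors are precisely what pin down the window -- and yet must commit to $\phi(e)$ forever. The argument therefore has to carry a structural invariant on the color set of every vertex (of roughly the form ``it is a prefix-like set whose size equals the current degree''), strong enough to guarantee all at once that the window computed from the already-arrived edges is the correct one, that the committed color never exceeds $\Delta$ (this is where $\Delta\ge 2d$ is used), and that it never collides with a front-edge of $v_k$ that shows up later. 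Checking that this invariant is preserved by every arrival, for both endpoints of the new edge simultaneously, is where the bulk of the work lies, and it is this bookkeeping that should pin down the exact constants $2d$ and $d+1$.
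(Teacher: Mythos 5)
Your budget accounting and your handling of the easy regime ($\Delta<2d$: transmit the color verbatim on a palette of size $\le 2d$) match the paper, but for $\Delta\ge 2d$ your proposal is a plan rather than a proof, and the missing piece is exactly the one you flag yourself as ``the real obstacle.'' Your scheme needs two things that are never established: (i) an optimal $\Delta$-edge-coloring $\phi$ in which, at every vertex, the back-edge colors sit in a size-$2d$ window above a base determined by that vertex's other edges --- Theorem~\ref{vizing} only asserts that \emph{some} $\Delta$-coloring exists and gives no such local structure, so you would have to re-prove a refined version of it; and (ii) a way for the algorithm to compute that base online even though, when $e=(v_i,v_k)$ arrives, most of the edges at $v_k$ that would determine the base may not have arrived yet. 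You propose to fix (ii) with an unspecified ``prefix-like'' invariant on per-vertex color sets, but you neither state nor verify it, and it is far from clear that any such invariant survives adversarial arrival orders. As written, the construction of the advice and the correctness argument are both absent.

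The paper resolves the hard regime by a different decomposition that sidesteps both difficulties. Assuming first that $2d$ divides $\Delta$, the oracle partitions $E$ into $a=\Delta/(2d)$ parts $E_1,\ldots,E_a$, each of maximum degree $2d$; each part is still $d$-degenerate, so by Theorem~\ref{vizing} it is $2d$-edge-colorable, and the parts get disjoint palettes. The color \emph{within} a part is then sent verbatim in $\ceil{\log(2d)}$ bits --- no reconstruction of large colors is ever needed. The only thing left to communicate is the part index $j'$, which can be as large as $\Delta/(2d)$; the trick is that the oracle assigns front-edges at $v_i$ to parts greedily in arrival order, and defines a candidate set $J(e)$ in terms of \emph{only the edges at $v_i$ that arrived before $e$}, so the algorithm can recompute $J(e)$ online. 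A counting argument using the two facts that $v_k$ has at most $d$ back-edges (so the constraint at $v_k$ never blocks an assignment) and that $v_i$ has at most $d$ back-edges (so at most $d$ candidate indices in $J(e)$ are actually infeasible because of a back-edge at $v_i$ arriving later) shows the true index $j'$ is among the $d+1$ smallest elements of $J(e)$; its rank is sent in $\ceil{\log(d+1)}$ bits. The final $1$ bit handles the remainder $\Delta\bmod 2d$ by peeling off the edges of $b$ color classes of an optimal coloring into a separate set $E_0$ whose colors are sent verbatim. If you want to salvage your approach, the lesson to take is that the advice should identify a \emph{bounded-degree subgraph} containing $e$ (via a rank inside an online-computable candidate list) rather than attempt to encode a color from the full palette $\{1,\ldots,\Delta\}$ relative to colors the algorithm has not yet seen.
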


Theorem~\ref{main} assumes that the degeneracy of the input graph is at most $d$, where $d$ is a constant hard-wired into both the algorithm and the oracle. In Theorem~\ref{main2}, we show how the assumption that $d$ is constant can be removed by communicating $d$ as part of the advice. Exactly how to do this depends on the model of advice complexity. 

In order to prove Theorem~\ref{main}, we start by assuming that $2d$ divides the maximum degree, $\Delta$, of the input graph. Later on, we will show how to reduce the general case to this special case. 

Let $G=(V,E)$ be a $d$-degenerate input graph of maximum degree $\Delta$ and let $a=\frac{\Delta}{2d}\in\mathbb{N}$. We will first give a high-level description of the oracle and the corresponding algorithm. The main idea is to partition the edges $E$ into $a$ disjoint subsets, $E_1,\ldots , E_a$, such that the maximum degree of the graph $(V,E_j)$ is $2d$ for $1\leq j\leq a$ (this is possible since we are assuming $\Delta=2d\cdot a$). By Theorem~\ref{vizing}, $(E, V_j)$ is $2d$-edge-colorable. Thus, if the algorithm knew how to make such a partition, then, using $O(\log d)$ bits of advice per edge, the algorithm could make an optimal edge coloring of each $E_j$, and hence all of $G$.

However, the oracle cannot afford to compute such a partition and then simply encode the index $j$ such that $e\in E_j$ for each edge, since this would require too much advice per edge if $a$ is large. Instead, the oracle finds a specific partition which is based on the arrival time of the edges  and the fact that $G$ is $d$-degenerate. This partition is such that when an edge $e$ is revealed, the algorithm itself can (without advice) compute a sufficiently small set of indices which always contains the correct index $j$.~This makes it possible to reduce the number of advice bits needed for the algorithm to learn \mbox{the index $j$.}

In order to produce this partition, the oracle orders the vertices of the $d$-degenerate input graph such that no vertex has more than $d$ back-edges. Starting with the first vertex in this ordering, the oracle processes the front-edges of each vertex ordered by (increasing) time of arrival. For each edge, the oracle determines the lowest index $j'$ such that the edge can be assigned to $E_{j'}$ while maintaining that $(V, E_{j'})$ has maximum degree at most $2d$. Note that whenever a front-edge, $e$, at $v$ is being processed, the oracle has already assigned all back-edges of $v$ to some sets in the partition. Since these back-edges may arrive later than $e$, they may be unknown to the algorithm at the time when $e$ is revealed. Therefore, the advice for the front-edge $e$ will warn the algorithm not to assign $e$ to $E_{j}$ if this would later on prevent the intended assignment of some back-edge at $v$ to $E_{j}$.  

\subsubsection{The oracle for the case where $2d$ divides $\Delta$.}
We now give a formal description of the oracle and the algorithm. To each edge $e\in E$, the oracle associates a bit string, $B(e)$, of length $\ceil{\log (2d)}+\ceil{\log (d+1)}$ by following Procedure \ref{euclid}. 
\vspace{-0.2cm}
\begin{algorithm}[h]
\caption{Constructing the advice in the case where $2d$ divides $\Delta$.}\label{euclid}
\begin{algorithmic}[1]
\normalsize
\algrenewcommand\algorithmicrequire{\textbf{Input:}}
\algrenewcommand\algorithmicensure{\textbf{Output:}}
\Require A $d$-degenerate graph $G=(V,E)$ of maximum degree $\Delta$ where \mbox{$a\cdot 2d=\Delta$} for some $a\in\mathbb{N}$, together with arrival times of the edges.
\Ensure A bit string $B(e)$ of length $\ceil{\log 2d}+\ceil{\log (d+1)}$ for each edge $e\in E$.

\State $E_j\gets\emptyset$ for $1\leq j\leq a$
\State Compute an ordering $\{v_1,\ldots , v_n\}$ of the vertices of $G$ such that, for \mbox{$1\leq i\leq n$}, the vertex $v_i$ is adjacent to at most $d$ vertices in $\{v_1,\ldots , v_{i-1}\}$.

\vspace{0.2em}
\State Let $E(v_i)$ denote the edges incident to $v_i\in V$.
\State $\text{Prev}(e, v_i)\gets \{f\in E(v_i) \colon \text{$f$ arrives before $e$}\}$ for $e\in E, v_i\in V$.
\vspace{0.15em}
\For{$i=1$ \textbf{to} $n$}
\State Let $\{e_1,\ldots , e_{d_f(v_i)}\}$ be the front-edges at $v_i$ ordered by time of arrival.
\For{$s=1$ \textbf{to} $d_f(v_i)$}
\State $e \gets e_s$
\State $J(e) \gets \big\{j : \ab{E_j\cap \text{Prev}(e, v_i)}\leq 2d-1\big\}$ \label{je}
\State Let $j'$ be the lowest index such that $\ab{E_{j'}\cap E(v_i)}\leq 2d-1$. \label{ji}
\State $E_{j'} \gets E_{j'}\cup\{e\}$
\State Use the last $\ceil{\log (d+1)}$ bits of $B(e)$ to encode 
$\ab{\{j\in J(e) \colon j<j'\}}$. \label{jia}
\EndFor
\EndFor
\State Compute $2d$-edge-colorings $\mathcal{C}_j$ of $(V,E_j)$ for all $1\leq j\leq a$.
\vspace{0.05em}
\State For each edge $e\in E$, use the first $\ceil{\log(2d)}$ bits of $B(e)$ to encode the color assigned to $e$ in $\mathcal{C}_j$, where $j$ is such that $e\in E_j$.
\end{algorithmic}
\end{algorithm}
\FloatBarrier
In order to prove the correctness of Procedure~\ref{euclid}, we introduce the following terminology: At any point during the execution of Procedure~\ref{euclid}, we say that an edge can \emph{legally} be assigned to a subset $E_j$ if this assignment does not make the maximum degree of $(V,E_j)$ larger than $2d$. Also, we let $\mathcal{P}=\{E_1,\ldots , E_a\}$. We will show in Lemma~\ref{lemma} that the index $j'$ in line \ref{ji} is such that $e$ can legally be assigned to $E_{j'}$ and that the number in line \ref{jia} can be encoded using $\ceil{\log (d+1)}$ bits. 

\begin{lemma} Suppose that during the execution of Procedure~\ref{euclid}, the second for-loop has just been entered and that $e=e_s$. Let $j'$ be the lowest index such that $\ab{E_{j'}\cap E(v_i)}\leq 2d-1$. Then, $e$ can legally be assigned to $E_{j'}$. Furthermore, $j'$ is among the $d+1$ lowest indices in $J(e)$.
\label{lemma}
\end{lemma}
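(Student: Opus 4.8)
The plan is to verify the two assertions separately, in both cases exploiting the degeneracy ordering and the fact that when $e = e_s$ is being processed, all back-edges at $v_i$ have already been assigned to subsets (they belong to vertices $v_k$ with $k < i$), while among the front-edges at $v_i$ only $e_1, \dots, e_{s-1}$ have been processed. First I would count how many edges of $E(v_i)$ have been assigned so far: the back-edges at $v_i$ number at most $d$ (by the degeneracy ordering), and the previously-processed front-edges $e_1,\dots,e_{s-1}$ contribute $s-1 \le d_f(v_i) - 1$. Crucially, each of the $a$ subsets $E_j$ can contain at most $2d$ edges incident to $v_i$ before this assignment would violate the degree bound, so the pigeonhole/counting argument is that the "load" on $v_i$ is spread over the $a$ subsets.

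For the first claim, that $e$ can legally be assigned to $E_{j'}$: by definition $j'$ is the lowest index with $|E_{j'} \cap E(v_i)| \le 2d - 1$, so assigning $e$ brings this count to at most $2d$, which does not exceed the degree bound at $v_i$. I must also check that such a $j'$ exists, i.e., that not all $a$ subsets already have $2d$ edges incident to $v_i$; this follows because the total number of edges of $E(v_i)$ assigned so far is at most $d + (d_f(v_i) - 1) < d + d_f(v_i) \le \deg(v_i) \le \Delta = 2d \cdot a$, so at least one subset has room. (One should be slightly careful with the exact inequality, but the slack is comfortable.) The other degree constraint — the degree at the far endpoint of $e$ in $(V, E_{j'})$ — is handled because that endpoint is some $v_k$ with $k > i$, so $e$ is its only edge in $E(v_i)$-type processing seen so far; more precisely the far endpoint's relevant edges are its own back-edges, and one argues the same way when that vertex is later processed. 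The cleanest phrasing is that "legally" only constrains the maximum degree of $(V, E_{j'})$, and since $v_i$ and the far endpoint are the only two affected vertices, and $v_i$'s count stays $\le 2d$ while the far endpoint currently has few incident edges in $E_{j'}$, legality holds.

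For the second claim, that $j' $ is among the $d+1$ lowest indices in $J(e)$: recall $J(e) = \{ j : |E_j \cap \mathrm{Prev}(e, v_i)| \le 2d - 1 \}$, where $\mathrm{Prev}(e, v_i)$ is the set of edges at $v_i$ arriving before $e$ — this is exactly what the algorithm can compute on its own. The point is that $E(v_i) \setminus \mathrm{Prev}(e, v_i)$ consists of the edges at $v_i$ arriving after $e$; among these, only back-edges at $v_i$ can already have been assigned to a subset (front-edges after $e$ have not been processed yet), and there are at most $d$ back-edges total. So for any index $j$, $|E_j \cap E(v_i)|$ and $|E_j \cap \mathrm{Prev}(e, v_i)|$ differ by at most $d$, and in fact $\sum_j \big( |E_j \cap E(v_i)| - |E_j \cap \mathrm{Prev}(e,v_i)| \big) \le d$. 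Now suppose for contradiction that $j'$ is not among the $d+1$ lowest indices of $J(e)$; then there are at least $d+1$ indices $j_1 < \dots < j_{d+1} < j'$ all lying in $J(e)$, i.e., with $|E_{j_\ell} \cap \mathrm{Prev}(e,v_i)| \le 2d-1$. But since each such $j_\ell < j'$ and $j'$ is the *lowest* index with $|E_{j_\ell} \cap E(v_i)| \le 2d - 1$, every $j_\ell$ must have $|E_{j_\ell} \cap E(v_i)| \ge 2d$, hence $|E_{j_\ell}\cap E(v_i)| - |E_{j_\ell}\cap\mathrm{Prev}(e,v_i)| \ge 1$; summing over the $d+1$ values $\ell$ gives a total increment of at least $d+1$, contradicting the bound of $d$ above. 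Hence at most $d$ indices below $j'$ lie in $J(e)$, so $j'$ is among the $d+1$ lowest, and $|\{ j \in J(e) : j < j' \}| \le d$ can be encoded in $\ceil{\log(d+1)}$ bits.

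The main obstacle I anticipate is the bookkeeping in the second claim — pinning down precisely which edges at $v_i$ can have been assigned to a subset before $e$ is processed, and hence justifying the key inequality $\sum_j \big( |E_j \cap E(v_i)| - |E_j \cap \mathrm{Prev}(e,v_i)| \big) \le d$. The subtlety is that this difference is governed *only* by the back-edges at $v_i$ arriving after $e$ (front-edges after $e$ are untouched, and everything arriving before $e$ is in $\mathrm{Prev}$), and there are at most $d$ back-edges at $v_i$ in total; once that is stated carefully the contradiction argument is short. The first claim is comparatively routine counting.
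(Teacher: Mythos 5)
Your argument is correct and takes essentially the same route as the paper's: both parts hinge on the observation that when $e=e_s$ is processed, only back-edges can account for assigned edges of $E(v_i)$ lying outside $\mathrm{Prev}(e,v_i)$ (there are at most $d$ of them) and the far endpoint $v_k$ with $k>i$ has at most $d-1$ assigned incident edges, so only the constraint at $v_i$ matters; your summation $\sum_j\bigl(\ab{E_j\cap E(v_i)}-\ab{E_j\cap\mathrm{Prev}(e,v_i)}\bigr)\le d$ is just the paper's witness argument (each index $j\in J(e)$ with $j<j'$ must contain a back-edge at $v_i$ arriving after $e$, and the $E_j$ are disjoint) in aggregated form. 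One cosmetic slip you already flagged: the chain $d+(d_f(v_i)-1)<d+d_f(v_i)\le\deg(v_i)$ has its last inequality reversed, but the bound you actually need is immediate — the number of already-assigned edges at $v_i$ is $b(v_i)+(s-1)\le b(v_i)+d_f(v_i)-1=\deg(v_i)-1\le\Delta-1<2da$, where $b(v_i)\le d$ is the number of back-edges — which is exactly the pigeonhole bound the paper uses, so the conclusion stands.
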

\begin{proof}

Assume that $e=(v_i, v_k)$ is a front-edge at $v_i$ and a back-edge at $v_k$. Because $i<k$, none of the front-edges at $v_k$ has yet been assigned to any subset in $\mathcal{P}$. Since $v_k$ has at most $d$ back-edges (including $e$), it follows that no subset in $\mathcal{P}$ currently contains more than $d-1$ edges incident to $v_k$. Thus, if $e$ cannot legally be assigned to some subset $E_j$, this can only be because it would violate the degree constraint at $v_i$. That is, $e$ can be legally assigned to $E_j$ if and only if $\ab{E_j\cap E(v_i)}\leq 2d-1$. Since at most $\Delta-1=a\cdot 2d-1$ edges incident to $v_i$ have arrived earlier than $e$, and since there are $a$ subsets in $\mathcal{P}$, this implies that $e$ can legally be assigned to at least one subset in $\mathcal{P}$.

Let $j'$ be the lowest index such that $e$ can legally be assigned to $E_{j'}$. Clearly, $j'\in J(e)$ since $\text{Prev}(e,v_i)\subseteq E(v_i)$. We will show that $j'$ is in fact among the $d+1$ lowest indices in $J(e)$. Let $j\in J(e)$. 
By definition of $J(e)$, the number of edges in $E_{j}$ which are incident to $v_i$ and arrive before $e$ is at most $2d-1$. Thus, if $e$ cannot legally be assigned to $E_{j}$, then there must be an edge $f\in E_j$ which is incident to $v_i$ but arrives later than $e$. The front-edges at $v_i$ arriving later than $e$ has not yet been assigned to any subset in $\mathcal{P}$, and so $f$ must be a back-edge at $v_i$. Since $v_i$ has at most $d$ back-edges, there can be at most $d$ indices $j\in J(e)$ such that $e$ cannot legally be assigned to $E_j$. It follows that $j'$ must be among the $d+1$ lowest indices in $J(e)$.
\end{proof}

Combining the assumption that $G$ has maximum degree $\Delta=a 2d$ with Lemma~\ref{lemma} shows that Procedure~\ref{euclid} constructs a partition $E_1,\ldots , E_a$ of $E$ such that the maximum degree of $(V, E_i)$ is $2d$, for $1\leq i\leq n$. Furthermore, the number in line \ref{jia} is at most $d+1$ (and non-negative), and hence it can be encoded in binary using $\ceil{\log (d+1)}$ bits. It follows from Theorem~\ref{vizing} that each of the graphs $(V,E_i)$ can be edge colored using $2d$ colors since they all have maximum degree $2d$ and are $d$-degenerate (because they are subgraphs of $G$ which is $d$-degenerate). This proves the correctness of Procedure~\ref{euclid}.

\subsubsection{The algorithm for the case where $2d$ divides $\Delta$.} We now describe how the algorithm, $\ALG$, uses the advice provided by the oracle. Note that when an edge $e$ arrives, $\ALG$ is able to compute the set of indices $J(e)$ as defined in line \ref{je} of Procedure~\ref{euclid}, since $J(e)$ only depends on $d$ and the edges that have arrived earlier than $e$. Thus, $\ALG$ can compute the index $j'$ such that $e$ was assigned to $E_{j'}$ by Procedure~\ref{euclid} by learning the number $\ab{\{j\in J(e) \colon j<j'\}}$ from the last $\ceil{\log (d+1)}$ bits of $B(e)$. The algorithm (internally) assigns $e$ to $E_{j'}$. 
Then, $\ALG$ reads the integer, $c$, encoded by the first $\ceil{\log (2d)}$ bits of $B(e)$ and colors $e$ with the color $((j'-1)2d+c)$.

It follows that for all $1\leq j\leq a$, the algorithm colors the edges of $E_j$ with the colors $((j-1)2d+1),\ldots , j2d$ and produces a coloring of $E_j$ which is equivalent to the coloring $\mathcal{C}_j$ computed by the oracle. Thus, $\ALG$ produces an optimal edge coloring of $G$.

\subsubsection{The general case.}
Using the algorithm for the case where $2d$ divides $\Delta$ as a subroutine, we are now ready to prove Theorem~\ref{main}.

\begin{proof}[of Theorem~\ref{main}] Fix $d\in\mathbb{N}$. We will describe an algorithm, $\ALG$, and an oracle, \o, satisfying the conditions of the theorem. Let $G=(V,E)$ be a $d$-degenerate input graph of maximum degree $\Delta$. To each edge $e\in E$, the oracle associates a bit string, $B(e)$, of length $1+\ceil{\log (2d)}+\ceil{\log (d+1)}$. 
The definition of $B$ falls into two cases.

\emph{Case: $\Delta<2d$.} The oracle computes an optimal edge coloring $\mathcal{C}$ of $G$. Since $\Delta<2d$, Vizing's Theorem implies that at most $2d$ different colors are used in $\mathcal{C}$. Let $e\in E$. The first bit of $B(e)$ will be a $0$. The next $\lceil \log (2d)\rceil$ bits will encode the color assigned to $e$ in $\mathcal{C}$. The remaining $\ceil {\log d}$ bits of $B(e)$ are set arbitrarily.

\emph{Case: $\Delta\geq 2d$.} Fix $a,b\in\mathbb{N}$ such that $\Delta=a2d+b$ and $0\leq b\leq 2d-1$. By assumption, $a\geq 1$. The oracle computes an optimal edge coloring $\mathcal{C}$ of $G$. Since $\Delta\geq 2d$, Theorem \ref{vizing} implies that $\Delta=a2d+b$ colors are used in $\mathcal{C}$. Let $E_0$ be the edges colored with the colors $1,2,\ldots , b$. For $e\in E_0$, the bit string $B(e)$ is defined as follows: The first bit is a $0$. The next $\lceil \log (2d)\rceil$ bits encode the color assigned to $e$ in $\mathcal{C}$ (this is clearly possible since $b\leq 2d-1$). The remaining $\ceil {\log d}$ bits of $B(e)$ are set arbitrarily.

Let $G'=(V, E\setminus E_0)$. Since $G'$ is $a2d$-edge-colorable, its maximum degree is at most $a2d$. On the other hand, no vertex in $V$ is incident to more than $b$ edges from $E_0$, and so the maximum degree is at least $\Delta-b=a2d$. Furthermore, removal of edges cannot increase the degeneracy of a graph. Thus, $G'$ must be $d$-degenerate. For edges in $G'$, the first bit of $B(e)$ is set to $1$. The remaining bits of $B(e)$ are constructed by running Procedure~\ref{euclid} on $G'$.

We will now define the algorithm, \ALG. For technical reasons, and since the algorithm does not know $\chi'(G)$, we begin by allowing the algorithm to use colors from $\{0,1\}\times\mathbb{N}$. The algorithm receives the advice $B(e)$ along with each edge $e\in E$. If the first bit of $B(e)$ is a $0$, the algorithm learns which color, $c$, to use for $e\in E_0$ by reading the next $\ceil{\log(2d)}$ bits of $B(e)$. It then assigns the color $(0,c)$ to $e$. If the first bit of $B(e)$ is a $1$ then $\ALG$ simulates, using the remaining bits of $B(e)$, the algorithm for the case where $2d$ divides $\Delta$ with $G'$ as input graph. If that algorithm would assign the color $c$ to $e$, \ALG assigns the color $(1,c)$ to $e$. One can easily modify $\ALG$ to use colors from the set $\{1,\ldots , \chi'(G)\}$ as follows: The first time some color $(i, c), i\in\{0,1\},$ is supposed to be used, $\ALG$ selects the lowest color $c'$ from $\{1,\ldots , \chi'(G)\}$ which has not yet been used. From then on, $\ALG$ always uses the color $c'$ instead of $(i,c)$. 
\end{proof}

\subsubsection{Improvements of the algorithm.} The family of algorithms from Theorem~\ref{main} can be used to create a single algorithm which works even if we do not assume that a constant upper bound on the degeneracy is known a priori. In the advice-with-request model, the oracle starts by computing the degeneracy, $\dgn(G)$, of the input graph $G$. Then, the oracle finds the largest integer $d$ such that $1+\ceil{\log (2d)}+\ceil{\log (d+1)}=1+\ceil{\log (2\dgn(G))}+\ceil{\log (\dgn(G)+1)}$. Clearly, $\dgn(G)\leq d$ and hence $G$ is $d$-degenerate. When the algorithm receives the very first advice string, it determines $d$ from the length of the advice received. From there on, Theorem~\ref{main} applies. Note that we do not increase the amount of advice by using $d$ instead of $\dgn(G)$ as an upper bound on the degeneracy. In the advice-on-tape model, the oracle can simply write the value $d$ onto the advice tape in a self-delimiting way using $O(\log d)$ bits (e.g., by writing $\ceil{\log d}$ in unary and then $d$ itself in binary). This gives the following theorem.

\begin{theorem}
\label{main2}
In both the advice-with-request and the advice-on-tape model, there exists an \ec algorithm which produces an optimal coloring and uses $O(m\log d)$ bits of advice in total, where $d$ is the degeneracy of the input graph.
\end{theorem}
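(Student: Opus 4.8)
The plan is to bootstrap Theorem~\ref{main}. That theorem already supplies, for every fixed $d\in\mathbb{N}$, an algorithm--oracle pair that colors any $d$-degenerate graph optimally using exactly $\ell(d):=1+\ceil{\log(2d)}+\ceil{\log(d+1)}$ advice bits per edge. The only ingredient missing for Theorem~\ref{main2} is a way for the algorithm to learn, before it colors the first edge, a value that is a valid upper bound on $\dgn(G)$ and is $O(\dgn(G))$, at a total extra cost of $O(m\log\dgn(G))$ bits. Since the oracle has unbounded computation and sees all of $G$, it can compute $\dgn(G)$; the only work is communicating a good bound on it within the constraints of each model.

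\emph{The advice-on-tape model.} Here the fix is immediate. The oracle sets $d:=\dgn(G)$, writes $d$ onto the tape in a self-delimiting way (for instance $\ceil{\log d}$ copies of the bit $1$, then a $0$, then the binary encoding of $d$ on $\ceil{\log d}$ bits, which is $O(\log d)$ bits in total), and appends exactly the advice that the Theorem~\ref{main} oracle for parameter $d$ would produce on input $G$. The algorithm reads the self-delimiting prefix, recovers $d$, and runs the Theorem~\ref{main} algorithm for that $d$. Since $G$ is $d$-degenerate, correctness is inherited directly, and the total advice is $O(\log d)+m\cdot\ell(d)=O(m\log d)$.

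\emph{The advice-with-request model.} Here one cannot prepend a variable-length header, since the per-request advice length is a fixed number $b$ and the algorithm must begin coloring at the first request; instead, $b$ itself will encode the degeneracy. Write $d:=\dgn(G)$. Since $\ell$ is non-decreasing and unbounded, the set $\{d'\in\mathbb{N} : \ell(d')=\ell(d)\}$ is non-empty (it contains $d$) and bounded above; let $d'$ be its maximum. Then $d'\ge d$, so $G$ is $d'$-degenerate, and $\ell(d')=\ell(d)=O(\log d)$. The oracle runs the Theorem~\ref{main} oracle for parameter $d'$ on $G$, emitting $\ell(d')$ bits with each edge. When the algorithm receives the first advice string it measures its length $\ell(d')$ and, since $d'$ is by definition the largest integer mapped by $\ell$ to this value, recovers $d'$, which agrees with the oracle's choice. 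From then on it simulates the Theorem~\ref{main} algorithm for $d'$, which produces an optimal coloring, and the total advice is $m\cdot\ell(d')=m\cdot\ell(d)=O(m\log d)$.

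The one delicate point is the advice-with-request case: one must check that the ``largest $d'$ with $\ell(d')=\ell(\dgn(G))$'' rule makes the decoding unambiguous -- so that the value used by the oracle and the value reconstructed by the algorithm coincide -- and, crucially, that replacing $\dgn(G)$ by the possibly larger $d'$ does not increase the per-edge advice length, which holds by construction since both have $\ell$-value $\ell(\dgn(G))$. Everything else is a black-box application of Theorem~\ref{main}, with the coloring produced being exactly the one guaranteed optimal there.
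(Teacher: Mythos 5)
Your proposal is correct and follows essentially the same route as the paper: in the advice-on-tape model the oracle writes the degeneracy self-delimitingly in $O(\log d)$ bits, and in the advice-with-request model it replaces $\dgn(G)$ by the largest $d'$ with the same per-edge advice length, which the algorithm then recovers from the length of the first advice string. The only addition is your explicit check that this largest $d'$ exists, a detail the paper leaves implicit.
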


We will show in Theorem~\ref{lower} that in the advice-with-request model, at least $\Omega(\log d)$ bits per edge are required to achieve optimality. Note that this matches asymptotically the upper bound of Theorems~\ref{main} and \ref{main2}. However, the exact number of bits used by the algorithm presented can be lowered. For example, we would like to mention that the algorithm can rather easily be modified to use only a single bit per edge in the case of 1-degenerate graphs (forests). 

\section{Lower bounds}
\subsection{Sublinear advice is no better than no advice}
Recall that one very interesting aspect of the advice-on-tape model is that it allows an algorithm to read a sublinear number of advice bits. However, we will now show that linear advice is required to break the lower bound of $2$ on the competitive ratio for \ec. We remark that the hard input instances used in Theorem~\ref{2lower} are the same as the ones used in \cite{edgecoloring} to obtain the lower bound of $2$ for the competitive ratio of algorithms without advice. The proof of Theorem~\ref{2lower} essentially shows how the techniques used in \cite{edgecoloring} can be extended to obtain a lower bound which holds even for algorithms with sublinear advice. 
\begin{theorem}
\label{2lower}
Let $\varepsilon >0$ and let \ALG be a $(2-\varepsilon)$-competitive \ec algorithm. Then \ALG must read at least $\Omega (m)$ bits of advice, where $m$ is the number of edges. This lower bound holds even if the input graph is required to be a forest. The constant of $\Omega(m)$ depends on $\varepsilon$.   \end{theorem}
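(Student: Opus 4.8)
The plan is to combine the adversary of Bar-Noy et~al.~\cite{edgecoloring} with the standard counting method for advice lower bounds. The starting point is that an algorithm $\ALG$ reading at most $b=b(m)$ bits of advice on inputs with at most $m$ edges can, on such inputs, be simulated by the family $\{A_s\}$ of deterministic online algorithms obtained by fixing the advice tape to each string $s\in\{0,1\}^{\le b}$: for every input $\sigma$ some $A_s$ produces the same colouring as $\ALG$. Hence it suffices to exhibit, for every $\varepsilon>0$, a family $\mathcal{I}_m$ of forests with $\Theta(m)$ edges and $|\mathcal{I}_m|\ge 2^{cm}$ for a constant $c=c(\varepsilon)>0$, together with the guarantee that every deterministic online algorithm is $(2-\varepsilon)$-competitive on at most $2^{cm/2}$ members of $\mathcal{I}_m$. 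Then the $\le 2^{b+1}$ advice strings cover at most $2^{b+1}\cdot 2^{cm/2}$ members, so if this is less than $|\mathcal{I}_m|$ some member defeats $\ALG$; this forces $b=\Omega(m)$, with the constant depending on $\varepsilon$.

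To build $\mathcal{I}_m$ I would adapt the $(2\Delta-1)$-construction of~\cite{edgecoloring}, fixing $\Delta=\Delta(\varepsilon)$ large enough that $2\Delta-1\ge(2-\varepsilon)\Delta$ (and a touch larger to absorb the additive constant of $\ALG$). All forests in $\mathcal{I}_m$ begin with the same matching $M$ of $N=\Theta(m)$ edges and then differ in the auxiliary edges attached to the endpoints of $M$; there are enough degrees of freedom to make $|\mathcal{I}_m|=2^{\Theta(m)}$, and each completion is arranged to be a forest of maximum degree $\Delta$ with $\opt=\Delta$. The crucial feature I want is that in each completion the auxiliary structure prescribes, for each edge of $M$, a colour that every $\Delta$-colouring must use on it, so that this ``correct'' colour of an $M$-edge is only determined by structure revealed after $M$ has been coloured; and, running the pigeonhole rounds of~\cite{edgecoloring} on the auxiliary part, an online algorithm whose colouring of $M$ disagrees with the prescribed one on more than a $\delta$-fraction of $M$ is forced to introduce the $(2\Delta-1)$-st colour (two threads whose colour sets were ``misplaced'' and turn out disjoint end up adjacent).

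Granting such a family, the counting is immediate: a deterministic algorithm $A$ colours $M$ before seeing anything that distinguishes the completions, so its colouring of $M$ is a single fixed sequence, which for all but at most $\binom{N}{\delta N}\,(2\Delta)^{\delta N}$ of the $2^{\Theta(N)}$ completions disagrees with the prescribed colouring on more than a $\delta$-fraction of $M$; on each such completion $A$ is not $(2-\varepsilon)$-competitive. Taking $\delta$ small enough makes $\binom{N}{\delta N}\,(2\Delta)^{\delta N}\le 2^{cm/2}$, as required.

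The main obstacle is the second step: one must reorganize the \cite{edgecoloring} construction so that the decisive colouring choices sit on a \emph{linear} number of up-front edges (the matching $M$) rather than being spread over its $O(\Delta)$ pigeonhole rounds, build the completions so that each prescribes a definite colour on every $M$-edge while staying a forest with $\opt=\Delta$, and verify that disagreement on a constant fraction of $M$ genuinely propagates through the later rounds to the $(2\Delta-1)$-st colour. The remaining bookkeeping --- that the instances have $\Theta(m)$ edges and that the constants can be chosen as functions of $\varepsilon$ alone --- is routine.
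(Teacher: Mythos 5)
Your outer counting framework (simulate the $b$-bit algorithm by $2^{b}$ deterministic algorithms, exhibit a large instance family on which each deterministic algorithm rarely succeeds) is sound, and it is essentially an oblivious reformulation of what the paper does. But the heart of the proof --- the combinatorial construction of the family --- is exactly the part you defer as ``the main obstacle,'' and the construction you sketch does not work. First, in a forest no gadget can prescribe ``a colour that every $\Delta$-colouring must use'' on a matching edge: any proper edge colouring of a forest composed with any permutation of the colours is again proper, so only \emph{relative} constraints can be forced, which breaks the Hamming-ball count $\binom{N}{\delta N}(2\Delta)^{\delta N}$ as stated. Second, your claimed mechanism for reaching the $(2\Delta-1)$-st colour (``two threads whose colour sets turn out disjoint end up adjacent'') is not the forcing mechanism of \cite{edgecoloring}: to force $2\Delta-1$ colours one needs $\Delta$ stars, \emph{all} coloured with the \emph{same} $\Delta-1$ colours, joined at a fresh vertex --- a coordinated event, not a pairwise disagreement --- and you give no argument that disagreeing with a prescription on a $\delta$-fraction of $M$ produces such a configuration.

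The paper avoids all of this by keeping the stars themselves as the decision-carrying edges. It reveals $R$ rows of $\alpha=(\Delta-1)\binom{2\Delta-2}{\Delta-1}+1$ stars $K_{1,\Delta-1}$ up front; by pigeonhole any colouring of a row with at most $2\Delta-2$ colours contains $\Delta$ identically coloured stars, and among $k$ surviving algorithms some $\Delta$-subset of a row is identically coloured by at least $\lceil k/\beta\rceil$ of them, where $\beta=\binom{\alpha}{\Delta}$. Joining those $\Delta$ star centres to a fresh vertex kills that fraction, so after $R$ rounds at most $(1-1/\beta)^{R}2^{b}$ algorithms remain alive; since one must survive, $b\geq R\log\bigl(\beta/(\beta-1)\bigr)=\Omega(m)$. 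If you want your oblivious version, the right family is the $\beta^{R}$ ways of choosing one $\Delta$-subset per row (all stars still revealed first), on which each deterministic algorithm survives at most a $(1-1/\beta)^{R}$ fraction --- not a matching with colour-prescribing gadgets. As written, your proposal is missing the construction that makes the counting go through.
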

\begin{proof}
Let \ALG be a $(2-\varepsilon)$ competitive algorithm in the advice-on-tape model (in the advice-with-request model, Theorem~\ref{2lower} follows directly from \cite{edgecoloring}). By definition, there exists a constant $c_0$ such that $\ALG(\sigma)\leq (2-\varepsilon)\OPT(\sigma)+c_0$ for any input sequence $\sigma$. Fix $\Delta\geq 2$ such that $\varepsilon\Delta>c_0+1$. The adversary graph will be a forest of maximum degree $\Delta$ (and therefore $\Delta$-edge-colorable). We introduce some notation which will be used in the proof. Let $\alpha=(\Delta - 1)\cdot \binom{2\Delta-2}{\Delta-1}+1$, let $\beta=\binom{\alpha}{\Delta}$ and let $R\in\mathbb{N}$ be a large integer. A \emph{star} is the complete bipartite graph $K_{1,\Delta-1}$. We say that a collection of stars are \emph{colored the same} if the edges of all of the stars are colored using the same $\Delta-1$ colors. The values $\alpha$ and $\beta$ have been chosen such that the following holds: \nopagebreak

\emph{Fact 1:} Let $\mathcal{C}$ be an edge coloring of $\alpha$ stars using at most $2\Delta-2$ colors. Then, at least $\Delta$ stars must be colored the same in $\mathcal{C}$.

\emph{Fact 2:} Let $\mathcal{C}_1, \mathcal{C}_2, \ldots , \mathcal{C}_k$ be edge colorings of $\alpha$ stars such that each edge coloring uses at most $2\Delta-2$ colors. Then, there exist $\Delta$ stars such that these stars are colored the same in at least $\ceil{k/\beta}$ of the colorings. 

Fact 1 follows from the pigeonhole principle. Fact 2 follows since there are $\beta$ ways to select the $\Delta$ stars guaranteed to exist by Fact 1. 

The total number of edges in the forest will be $m=(\alpha+\Delta)R$. Let $b$ be the maximum number of advice bits read by $\ALG$ on inputs of length $m$. Note that each of the $2^b$ possible advice strings read by $\ALG$ on inputs of length $m$ corresponds to a deterministic online algorithm without advice. Using this observation, we convert $\ALG$ into $2^b$ deterministic algorithms, $\A_1,\ldots, \A_{2^b}$, such that $\min_j \A_j(\sigma)\leq \ALG(\sigma)$ for any sequence $\sigma$ of $m$ edges. We say that $\A_j$ is \emph{alive} if the number of colors used by $\A_j$ so far is at most $2\Delta-2$. 

We will now describe the adversary. The adversary starts by revealing $R$ \emph{rows}, where each row consists of $\alpha$ stars. The remaining edges are revealed in a number of rounds, one for each row.

In round $i$, where $1\leq i\leq R$, the adversary uses the following strategy: Let $k$ be the number of algorithms alive just before round $i$. The adversary selects $\Delta$ stars from row $i$ which are colored the same by at least $\ceil{k/\beta}$ of the algorithms alive. Since an algorithm which is alive has used at most $2\Delta-2$ colors, this is always possible by Fact 2. Let $v_1^i,\ldots , v_{\Delta}^i$ be the vertices of degree $\Delta-1$ in the stars selected. The adversary reveals $\Delta$ edges, $(v, v_1^i),\ldots , (v,v_\Delta^i)$, connecting these vertices to a new vertex, $v$. An algorithm $\A_j$ which have colored the selected stars with the same $\Delta-1$ colors is forced to use $\Delta$ other colors for these new edges and, hence, to use $2\Delta-1$ colors in total. Thus, at the end of round $i$, the number of algorithms alive is at most $k-\ceil{k/\beta}\leq \left(1-1/\beta\right)k$.

Since $\varepsilon \Delta>c_0+1$, we have that $2\Delta-1 > (2-\varepsilon)\Delta+c_0.$ In particular, at least one algorithm $\A_j$ must be alive after round $R$ since the number of colors used by $\ALG$ is at most $(2-\varepsilon)\Delta+c_0$. Before the first round, the number of algorithms alive is at most $2^b$. After the last round, the number of algorithms alive is therefore at most $\left(1-1/\beta\right)^{R}2^b$. Thus, it must hold that $1\leq \left(1-1/\beta\right)^{R}2^b$. But, this implies that
\begin{equation}
b\geq  R\log\left(\frac{\beta}{\beta-1}\right)=\frac{\log\left(\frac{\beta}{\beta-1}\right)}{\alpha+\Delta}m=\Omega(m).
\end{equation}
This proves the theorem since the number of rounds $R$ (and therefore also $m$) can be chosen arbitrarily large.
\end{proof}

We remark that the hidden constant in the $\Omega(m)$ lower bound of Theorem~\ref{2lower} decreases very fast as $\varepsilon$ tends to zero. However, the main point of Theorem~\ref{2lower}, that sublinear advice does not offer any advantage, is not affected by this.

\subsection{Tight lower bounds in the advice-with-request model}
As we have shown, $O(m)$ bits of advice suffice to achieve optimality for graphs of bounded degeneracy. A natural question is whether this holds for general graphs. We give a partial negative answer by showing that in the advice-with-request model, this is not the case, not even for bipartite graphs. 

It is a well-known result of K\"onig \cite{Konig} that bipartite graphs are $\Delta$-edge-colorable. 
In the proof of Theorem~\ref{lower}, we will use the following gadget to ensure that two edges cannot be assigned different colors in an optimal edge coloring.

\begin{definition}
Let $n\geq 1$. The graph $H_n$ consists of a complete bipartite graph $K_{n, n}=(L,R)$ together with vertices $v_l, v_r$ and edges \mbox{$\{(v_l, v) : v\in L\}$}, $\{(v_r, v) : v\in R\}$. The vertex $v_l$ ($v_r$) is denoted the leftmost (rightmost) vertex.
\end{definition}
We say that two edges $e_1=(x_1, y_1)$ and $e_2=(x_2, y_2)$ are \emph{connected by an $H_n$} if $y_1$ is the leftmost vertex and $y_2$ is the rightmost vertex of the same $H_n$ (and neither $x_1$ nor $x_2$ is part of that $H_n$). See Figure~\ref{graf}.
\vspace{-0.1cm}
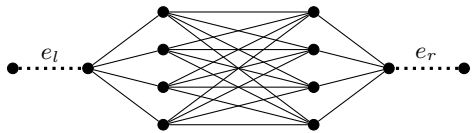
\begin{figure}[h!]
\begin{center}
\begin{tikzpicture}[every node/.style=defnode]

		\node   (0) at (-3, -0) {};
		\node   (1) at (-2, -0) {};
		\node   (2) at (-1, 0.25) {};
		\node   (3) at (-1, 0.75) {};
		\node   (4) at (-1, -0.25) {};
		\node   (5) at (-1, -0.75) {};
		\node   (6) at (1, -0.25) {};
		\node   (7) at (1, 0.75) {};
		\node   (8) at (2, -0) {};
		\node   (9) at (1, 0.25) {};
		\node   (10) at (3, -0) {};
		\node   (11) at (1, -0.75) {};

		\path (0) edge [dotted, very thick] node[fill=none,draw=none,above] {$e_l$}(1);
		\path (1) edge [] (3);
		\path (1) edge [] (2);
		\path (1) edge [] (4);
		\path (1) edge [] (5);
		\path (10) edge [dotted, very thick] node[fill=none,draw=none,above] {$e_r$}(8);
		\path (8) edge [] (7);
		\path (8) edge [] (9);
		\path (8) edge [] (6);
		\path (8) edge [] (11);
		\path (3) edge [] (7);
		\path (3) edge [] (9);
		\path (3) edge [] (6);
		\path (3) edge [] (11);
		\path (2) edge [] (7);
		\path (2) edge [] (9);
		\path (2) edge [] (6);
		\path (2) edge [] (11);
		\path (4) edge [] (7);
		\path (4) edge [] (9);
		\path (4) edge [] (6);
		\path (4) edge [] (11);
		\path (5) edge [] (7);
		\path (5) edge [] (9);
		\path (5) edge [] (6);
		\path (5) edge [] (11);
\end{tikzpicture}
\end{center}
\caption{$G_4$: Two edges $e_l$ and $e_r$ (dashed lines) connected by an $H_4$ (solid lines).}
\label{graf}
\end{figure}
\vspace{-0.4cm}
\begin{lemma}
For $n\geq 1$, let $G_n$ be the graph consisting of two edges, $e_l$ and $e_r$, connected by an $H_n$. Then, $G_n$ is $(n+1)$-edge-colorable. On the other hand, an edge coloring of $G_n$ in which $e_l$ and $e_r$ are assigned different colors must use at least $n+2$ colors.
\label{Hlemma}
\end{lemma}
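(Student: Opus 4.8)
The plan has two independent halves.

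\textbf{Upper bound ($(n+1)$-edge-colorability).} First I would record that $\Delta(G_n)=n+1$: writing $e_l=(x_1,v_l)$, $e_r=(x_2,v_r)$ and $K_{n,n}=(L,R)$, every vertex of $L$ is joined to $v_l$ and to all of $R$, every vertex of $R$ to $v_r$ and to all of $L$, while $v_l$ (resp.\ $v_r$) is joined to $x_1$ (resp.\ $x_2$) and to all of $L$ (resp.\ $R$); all of these degrees equal $n+1$, and $x_1,x_2$ have degree $1$. Next I would check that $G_n$ is bipartite, with $L$-side $L\cup\{v_r,x_1\}$ and $R$-side $R\cup\{v_l,x_2\}$: the $K_{n,n}$-edges run between $L$ and $R$, the edges at $v_l$ between $v_l$ and $L\cup\{x_1\}$, and the edges at $v_r$ between $v_r$ and $R\cup\{x_2\}$. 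The first claim is then immediate from König's theorem, which gives $\chi'(G_n)=\Delta(G_n)=n+1$.

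\textbf{Lower bound.} For the second claim I would argue by contradiction. Suppose $\phi$ is a proper edge coloring of $G_n$ using at most $n+1$ colors with $\phi(e_l)\neq\phi(e_r)$. Since $v_l$ has degree $n+1$, exactly $n+1$ colors are used, say $\{1,\dots,n+1\}$, and every vertex of degree $n+1$ — namely $v_l$, $v_r$, and all vertices of $L$ and $R$ — sees all $n+1$ colors on its incident edges. Put $a=\phi(e_l)$ and $b=\phi(e_r)$, so $a\neq b$. Looking at $v_l$, the $n$ edges from $v_l$ to $L$ carry exactly the colors $\{1,\dots,n+1\}\setminus\{a\}$; in particular none of them is colored $a$. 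Symmetrically, the edges from $v_r$ to $R$ carry exactly the colors $\{1,\dots,n+1\}\setminus\{b\}$.

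Now comes the key step. Fix $u\in L$: since $u$ sees all $n+1$ colors but the edge $uv_l$ is not colored $a$, exactly one of the $n$ edges of $K_{n,n}$ incident to $u$ is colored $a$. Ranging over all $u\in L$, we get $n$ edges of $K_{n,n}$ colored $a$, one at each vertex of $L$; being monochromatic they are pairwise non-adjacent, so no two share an endpoint in $R$ either, and hence (there being $n$ of them) they form a perfect matching between $L$ and $R$. Consequently every $w\in R$ is incident to a $K_{n,n}$-edge colored $a$, so the edge $wv_r$ is not colored $a$; thus $a\notin\{\phi(wv_r):w\in R\}=\{1,\dots,n+1\}\setminus\{b\}$, which is impossible unless $a=b$, a contradiction. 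Hence every edge coloring of $G_n$ assigning different colors to $e_l$ and $e_r$ uses at least $n+2$ colors.

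\textbf{Main obstacle.} The first half is routine bookkeeping plus König. The crux is the second half, and specifically the observation that ``color $a$ avoids $v_l$'' forces the $a$-colored $K_{n,n}$-edges to form a \emph{perfect} matching of $L\cup R$, which in turn forces ``color $a$ avoids $v_r$'' and then collides with $\phi(e_r)=b\neq a$; setting up this chain of implications cleanly (and noticing that all the relevant vertices have degree \emph{exactly} $n+1$, so no color budget is slack) is where the care is needed.
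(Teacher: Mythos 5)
Your proof is correct and follows essentially the same route as the paper: König's theorem for the upper bound, and for the lower bound the observation that a color missed at one of $v_l,v_r$'s stars must form a perfect matching inside $K_{n,n}$, forcing it to reappear where it cannot. The only difference is cosmetic — you propagate the color $a=\phi(e_l)$ from $L$ to $R$, while the paper propagates the color missing at $v_r$'s star from $R$ back to $L$.
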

\begin{proof}
$G_n$ can be edge colored using $n+1$ colors since it is a bipartite graph of maximum degree $n+1$. Let $\mathcal{C}$ be an edge coloring of $G_n$ such that $C(e_l)\neq C(e_r)$, where $C(e)$ is the color assigned to the edge $e$. Suppose, by way of contradiction, that only $n+1$ different colors are used in $\mathcal{C}$. Since $e_l$ and $e_r$ are colored differently, the set of colors used for edges between $v_l$ and $L$ cannot be identical to the set of colors used for edges between $v_r$ and $R$, since this would contradict that $\mathcal{C}$ uses only $n+1$ colors. Thus, there exists a color, $c$, such that there is an edge $e=(v_l, v), v\in L$ colored with the color $c$, while no edge between $v_r$ and $R$ is colored with the color $c$. It follows that for each $u\in R$, there must be an edge between $u$ and a vertex in $L$ colored with the color $c$, since $u$ has degree $n+1$, $\mathcal{C}$ uses $n+1$ colors and the edge $(v_r,u)$ is not colored with the color $c$. In particular, since $\ab{L}=\ab{R}$, there must be an edge from a vertex in $R$ to $v$ colored with the color $c$. This is a contradiction, since $(v_l, v)$ is also colored with the color $c$.
\end{proof}

\begin{theorem}
\label{lower}
An optimal \ec algorithm in the advice-with-request model must use at least $\Omega (\log \Delta)$ bits of advice per edge, even for bipartite graphs, where $\Delta$ is the maximum degree of the input graph.
\end{theorem}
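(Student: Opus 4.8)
The plan is a counting argument of the type standard for lower bounds in the advice-with-request model: I will fix a common \emph{prefix} of $2\Delta$ edges, exhibit $\Delta!$ ways to complete the input so that the prefix is forced to be coloured in $\Delta!$ pairwise distinct ways, and then observe that the $2\Delta b$ advice bits delivered along the prefix (where $b$ is the number of advice bits per edge) must be able to distinguish all of these colourings.

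Concretely, the adversary first reveals a star at a vertex $u$, i.e.\ $\Delta$ edges $f_1,\ldots,f_\Delta$ with common endpoint $u$, and then a star at a vertex $v$, i.e.\ $\Delta$ edges $e_1,\ldots,e_\Delta$ with common endpoint $v$. These $2\Delta$ edges, revealed in this fixed order, form the prefix and are identical across all instances. For each permutation $\sigma$ of $\{1,\ldots,\Delta\}$ there is an instance $I_\sigma$ obtained by continuing with $\Delta$ vertex-disjoint copies of $H_{\Delta-1}$, where the $i$-th copy connects $f_i$ and $e_{\sigma(i)}$ in the sense of the definition preceding Lemma~\ref{Hlemma}: the leftmost vertex of the $i$-th copy is the non-$u$ endpoint of $f_i$, and its rightmost vertex is the non-$v$ endpoint of $e_{\sigma(i)}$. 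The key steps are then: \emph{(1)} Each $I_\sigma$ is bipartite and has maximum degree exactly $\Delta$. Indeed, the only vertices joining distinct gadgets are $u$ and $v$, which have degree $\Delta$; each endpoint of a gadget that receives a star-edge has degree $(\Delta-1)+1=\Delta$; and bipartiteness follows by placing $u$ and $v$ on opposite sides and extending through each $H_{\Delta-1}$ consistently. Hence by König's theorem $\chi'(I_\sigma)=\Delta$, so an optimal algorithm uses exactly $\Delta$ colours on $I_\sigma$. \emph{(2)} Applying Lemma~\ref{Hlemma} to each gadget, any $\Delta$-edge-colouring of $I_\sigma$ must assign $f_i$ and $e_{\sigma(i)}$ the same colour. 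Since $f_1,\ldots,f_\Delta$ (resp.\ $e_1,\ldots,e_\Delta$) receive $\Delta$ pairwise distinct colours, the restriction of any optimal colouring of $I_\sigma$ to the prefix determines $\sigma$: it is the unique permutation with $c(f_i)=c(e_{\sigma(i)})$. In particular, distinct $\sigma$ force distinct prefix colourings, giving $\Delta!$ of them. \emph{(3)} Because the prefix edges arrive in the same order in every instance, the colour an optimal algorithm (running with the oracle's advice) assigns to the prefix on input $I_\sigma$ is a function only of the $2\Delta b$ advice bits received along the prefix; since the algorithm is optimal on $I_\sigma$, this colouring must be one extending to an optimal colouring of $I_\sigma$, hence must "encode" $\sigma$. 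Thus the algorithm realises at least $\Delta!$ distinct prefix colourings, so $2^{2\Delta b}\ge \Delta!$, and by Stirling's formula $b\ge \tfrac{1}{2\Delta}\log_2(\Delta!)=\Omega(\log\Delta)$.

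The main thing to get right—and the crux of the construction—is choosing a completion family that is rich enough (the $\Delta!$ permutations carry $\Theta(\Delta\log\Delta)$ bits of entropy, which divided among $2\Delta$ prefix edges yields $\Omega(\log\Delta)$ per edge) while keeping every $I_\sigma$ bipartite \emph{and} of maximum degree exactly $\Delta$, so that "optimal" provably means "uses $\Delta$ colours" and Lemma~\ref{Hlemma} bites. The two-stars-plus-$H_{\Delta-1}$-matching gadget is tailored precisely for this: each gadget attachment raises one degree-$(\Delta-1)$ vertex to degree $\Delta$ and never creates an odd cycle. A secondary subtlety is that an optimal edge colouring is only defined up to relabelling of the colours, so one cannot force an \emph{absolute} colour pattern on the prefix; the argument sidesteps this because it only ever uses the \emph{relative} pairing between the colours of the two stars, which is exactly what $\sigma$ controls.
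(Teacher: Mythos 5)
Your proposal is correct and follows essentially the same route as the paper: the identical two-stars-plus-$H_{\Delta-1}$-gadget construction indexed by permutations, with Lemma~\ref{Hlemma} forcing $c(f_i)=c(e_{\sigma(i)})$ in any $\Delta$-colouring, and a pigeonhole count of $\log(\Delta!)$ bits over the $2\Delta$ prefix edges. The only (immaterial) difference is that you phrase the counting as injectivity of $\sigma\mapsto$ prefix colouring, whereas the paper argues by contradiction from two permutations receiving identical advice.
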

\begin{proof}

Fix $\Delta\geq 2$. At the beginning, the adversary reveals two stars $K_{1,\Delta}$. Let $\{x_1,\ldots , x_\Delta\}$ and $\{y_1,\ldots, y_\Delta\}$ be the vertices of degree $1$ in each of these two stars, and let $x$ and $y$ be the center vertices. Furthermore, let $t$ be the time step right after both stars have been revealed. At time $t$, the adversary picks a permutation $\pi$ of $\{1,2,\ldots , \Delta\}$. For each $1\leq i\leq\Delta$, the edge $(x,x_i)$ is then connected to the edge $(y,y_{\pi(i)})$ by an $H_{\Delta-1}$ through $x_i$ and $y_{\pi(i)}$. Since the resulting graph is bipartite and has maximum degree $\Delta$, it can be colored using $\Delta$ colors.

Let $\ALG$ be an algorithm in the advice-with-request model such that, at time $t$, the total number of advice bits received by $\ALG$ is strictly less than $\log (\Delta !)$. We claim that $\ALG$ cannot be optimal.  Note that the adversary has $\Delta !$ different permutations to choose from. This implies that there must exist two different permutations $\pi, \pi'$ such that up until time $t$, the algorithm receives exactly the same bits of advice for both of these permutations. Thus, \ALG produces the same coloring, $\mathcal{C}$, of the two stars no matter which of $\pi$ and $\pi'$ the adversary chooses to use. Let $C(u,v)$ be the color assigned to the edge $e=(u,v)$ in $\mathcal{C}$. 
Fix $i$ such that $\pi(i)\neq \pi '(i)$. Because the edges are adjacent, $C(y,y_{\pi(i)})\neq C(y,y_{\pi'(i)})$. Since $C(x, x_i)$ cannot be the same as both $C(y, y_{\pi (i)})$ and $C(y, y_{\pi' (y)})$, we may assume without loss of generality that $C(x,x_i)\neq C(y,y_{\pi(i)})$. By Lemma~\ref{Hlemma}, this implies that $\ALG$ will use at least $\Delta+1$ colors when the adversary chooses the permutation $\pi$. 
We conclude that an optimal algorithm must have received at least $\log (\Delta !)$ bits of advice at time $t$. Since only $2\Delta$ edges are revealed before time $t$, this is only possible if the algorithm receives at least $\frac{\log (\Delta !)}{2\Delta}=\Omega(\log \Delta)$ bits of advice per edge.
\end{proof}

For the adversary graph used in Theorem \ref{lower}, the number of edges is $m=O(\Delta ^3)$. Thus, we may restate the lower bound of $\Omega (\log \Delta)$ bits per edge in terms of $m$ and get a lower bound of $\Omega(\log (m^{1/3}))=\Omega (\log m)$ bits per edge. This shows that even if we insist on measuring the amount of advice solely as a function of $m$ (and not also $\Delta$), the trivial upper bound of $O(m\log m)$ on the advice complexity is still asymptotically tight.  Furthermore, the graph is $\Delta$-regular. It follows that the degeneracy $d$ of the graph is $d=\Delta$. Thus, the lower bound may also be stated in terms of the degeneracy as $\Omega (\log d)$ bits per edge. This matches asymptotically the upper bound of Theorems~\ref{main} and \ref{main2}.

\section{Concluding remarks and open problems}
\label{conclusion}

As a consequence of Euler's formula, the degeneracy of a planar graph is at most $5$. Thus, Theorem~\ref{main} implies that $8$ bits of advice per edge (and hence $8m=O(m)$ bits in total) suffice to achieve optimality for planar graphs. On the other hand, since a forest is a planar graph, Theorem~\ref{2lower} shows that $\Omega(m)$ bits of advice are necessary just to achieve a competitive ratio better than $2$. As mentioned, the greedy algorithm is $2$-competitive~\cite{edgecoloring} and uses no advice at all. Thus, Theorems~\ref{main} and \ref{2lower} completely determines (asymptotically) the advice complexity of edge coloring planar graphs, in both models of advice complexity.

For bipartite graphs, the picture is not as clear. The lower bound of Theorem~\ref{lower} for bipartite graphs relies on the assumption that an algorithm receives a fixed number of advice bits per edge, and so it only holds in the advice-with-request model. The lower bound may be viewed as a worst-case lower bound: We show that there exist some edges for which $\Omega(\log \Delta)$ bits of advice are required. The advice-on-tape model allows us to also study the amortized number of advice bits per edge. Determining the advice complexity of \ec for bipartite graphs in the advice-on-tape model is left as an interesting open problem.

\subsubsection{Acknowledgements.}
The author wishes to thank Joan Boyar and Lene M. Favrholdt for helpful discussions.

\bibliographystyle{plain}
\bibliography{ref}

\end{document}